\theoremstyle{plain}
\newtheorem{Thm}{Theorem}
\newcommand{\NN}{\mathbb{N}}
\newcommand{\red}[1]{\textcolor{black}{#1}}
\author[a,b]{Phillip L. Wilson}
\affil[a]{School of Mathematics \& Statistics, University of Canterbury, Private Bag 4800, Christchurch 8140, New Zealand. phillip.wilson@canterbury.ac.nz.}
\affil[b]{Te P\={u}naha Matatini, New Zealand.}
\title{Quantum Immortality and Non-Classical Logic}
\date{}
\begin{document}
	\maketitle

	\begin{abstract}
		The \emph{Everett Box} is a device in which an observer and a lethal quantum apparatus are isolated from the rest of the universe. On a regular basis, successive \emph{trials} occur, in each of which an automatic measurement of a quantum superposition inside the apparatus either causes instant death or does nothing to the observer. From the observer's perspective, the chances of surviving $m$ trials monotonically decreases with increasing $m$. As a result, if the observer is still alive for sufficiently large $m$ she rejects any interpretation of quantum mechanics which is not the many-worlds interpretation (MWI), since surviving $m$ trials becomes vanishingly unlikely in a single world, whereas a version of the her will necessarily survive in the branching MWI universe. That is, the MWI is testable, at least privately. Here we ask whether this conclusion still holds if rather than a classical understanding of limits built on classical logic we instead require our physics to satisfy a computability requirement by investigating the Everett Box in a model of a computational universe using a variety of constructive logic, Recursive Constructive Mathematics. We show that although the standard argument sketched above is no longer valid, we nevertheless can argue that the MWI remains privately testable in a computable universe. 
	\end{abstract}

\section{Introduction}\label{sec:intro}

The famously accurate predictions obtained by calculating solutions to the Schr\"{o}dinger equation do not depend upon the interpretation of quantum mechanics. For a physicist who wants to distinguish between these interpretations, being told to ``shut up and calculate'' misses the point: she is interested in ontology (what is real) not epistemology (what we can know). But since the predictions of all interpretations are the same, given by an interpretion-free solution of the Schr\"{o}dinger equation, how can we distinguish between them experimentally? In particular, is the \emph{many-worlds interpretation\footnote{So named by Bryce DeWitt \cite{dewitt70,dewitt72}.}} (MWI) of Hugh Everett \cite{ev,everett57} distinguishable from other interpretations? More fundamentally, is it testable?

These questions were addressed by Tegmark in \cite{tegmark97} (see also references therein for more history of the questions). In particular, Tegmark argued that the MWI is testable, and can be distinguished from other interpretations under extreme conditions, at least for one observer. Tegmark's argument (see also \cite{tegbook}) has two main points:
\begin{description}
	\item[T1] the MWI is testable and can be distinguished from other interpretations through experiment;
	\item[T2] the same experiment which can distinguish the MWI from other interpretations should also be taken as evidence in support of the MWI \emph{for the experimenter herself}.\footnote{In the language of \cite{lewis00}, the MWI is \emph{privately} testable, if not \emph{publicly} testable.} 
\end{description}

Popular descriptions of the MWI talk of the universe splitting into $n$ branches (or \emph{worlds}) whenever a quantum measurement has $n$ possible outcomes. However, as emphasised by \cite{tegmark97}, this was never postulated by Everett himself. Indeed, the core of the MWI is simply that the wavefunction never collapses and so the universe can be taken to be governed by a single, objectively real, universal wave function. However, the language of a splitting or branching universe, or observer, is a useful tool, and since it is also a common one (see a brief discussion of its history in \cite{lewis00}) we will sometimes use that language herein.

One of the better-known thought experiments which claims to establish T1 and T2 is a variant of the classic Schr\"{o}dinger's cat experiment in which the cat, or rather a human in place of the cat, is the observer \cite{tegbook,tegmark97,lewis00}. The life of the observer depends upon the outcome of an automatic measurement of a qubit called a \emph{trial}. Many trials occur, one after the other on a regular basis. We describe in detail in \S\ref{sec:qi} how this \emph{Everett Box}\footnote{Named in homage to the genesis of these ideas in Everett's seminal thesis \cite{ev} even though Everett himself never formulated them.} is able to establish T1 and T2. The argument is a probabilistic one: though the observer might get lucky and survive a few trials, continued survival in a non-MWI universe is extremely unlikely. On the other hand, a version of the observer is guaranteed to survive with probability 1 in one branch of an MWI universe, guaranteeing the so-called \emph{Quantum Immortality} of that observer. From the private perspective of the observer, her continuing survival therefore counts as evidence in support of the MWI, establishing both T2 and T1, non-MWI options being rejected as being too unlikely.

We call the line of argument summarised above and presented in \S\ref{sec:qi} the \emph{Quantum Immortality Argument (QIA)}, and refer to its conclusion as \emph{Quantum Immortality}. Although Everett never formally defined this experiment, variants of it have been given independently by several authors \cite{squires86, tegmark97}.  Neither the argument nor its conclusion is  universally accepted. The QIA has been attacked from various directions, not least in terms of its real-world applicability, for instance around the definition of death (or at least of a discrete binary distinction between ``alive'' and ``dead'') --- see \cite{tegbook,tegmark97}. From the philosophical perspective we see critiques\footnote{Not all of the authors of these critiques reject the MWI even while  rejecting the QIA; see for example \cite{deutsch99}.} based on the classical philosophical problem of individual identity and its persistence, what it means to ``expect'' a subjective outcome like one's own death as opposed to predicting an objective event, the distinction between actual and probable events, and the meaning of probabilistic thinking in the MWI context\footnote{\red{Of particular note here is the lack of an ensemble or ``God's eye'' view of the experiment: there is no vantage point from which an external obsever can quantify overall outcomes.}}; see \cite{deutsch99,lewis00,papineau04,aranyosi12,sebens15,sep-qm-manyworlds}, and references therein. Everett himself anticipated some of these objections in \cite{ev}.

There are two other ways in which the QIA is critiqued, both of which are much more general in their scope. They concern (1) the role of infinity and the infinitesimal in physics, and (2) the role of the computable. A motivation for the first of these is that if we live in a finite universe which has existed for finite time, and if the fields, matter, time, and space of the universe are all discrete at sufficiently small scales, then we should reject all objects and arguments which employ the infinite and the infinitesimal. Such strictly \emph{finitist} theories include digital physics \cite{wheeler90}, cellular automata \cite{nks}, loop quantum gravity \cite{loop1}, and more besides --- see \cite{schmidhuber97} and references therein. The questionable role of infinity in the QIA has been highlighted by \cite{tegbook} amongst others.

The second critique, namely that our current theories of physics are non-computable, is the focus of the present work. Requiring a computable theory of physics is essentially the same as requiring all knowledge to be obtained through an algorithmic process in finite time. It is not the same as requiring only finite objects, but it does necessitate working within so-called non-classical logics, as we outline in detail below. The desirable quality of computability in the foundations of physics is not obtained by classical logic.

Thus while probabilities and probabilistic thinking have been highlighted as potential concerns with the QIA \cite{lewis00,papineau04}, and while the role of the infinite and the infinitesimal in physics have also been called into question in this context \cite{tegbook}, to our knowledge no-one has examined the argument from a computable perspective before, and in particular from within non-classical logic.

Here we show that the testability of the MWI and the subjective evidence in support of the MWI given by the QIA are based on a classical understanding of limiting behaviours of functions which need not hold in other, non-classical logics. In particular, we show that the Quantum Immortality Argument fails in a constructive logic called Recursive Constructive Mathematics, commonly referred to as RUSS, in which all results are computable. Within RUSS, we show that the existence of so-called \emph{pathological} probability distributions mean that we must reject the QIA. However, we are able to show through a new argument that a constructive version of the QIA holds even in a universe (or universes) governed by such non-classical logics, and thus that the MWI remains (privately) testable and distinguishable from other interpretations of quantum mechanics.

In \S\ref{sec:qi} we give a brief overview of the QIA, and in particular how the Everett Box implies T1 and T2. Next, in \S\ref{sec:monk} we define computability and outline the arguments in favour of requiring computability in theories of physics, before giving a summary of the main result from \cite{mjw} on which we base the principal argument in this paper. With this background we prove in \S\ref{sec:main} the Pathological Mortality Theorem, which shows that the QIA does not work in RUSS. However, in \S\ref{sec:restore} we present a constructive, computable proof that Quantum Immortality nevertheless holds in a universe whose logic is that of RUSS. We call this argument the \emph{computable Quantum Immortality Argument}. It shows that the MWI is testable in a universe governed by computable logic. Finally, we summarise and discuss our results in \S\ref{sec:conc}.

\section{Quantum Immortality}\label{sec:qi}

A conscious observer is placed in a box with a lethal quantum apparatus. The contents of the box are completely isolated from the rest of the universe. Although this thought experiment does not depend on the details of the lethal apparatus, a particularly clear example is given by \cite{tegmark97} and called the ``quantum gun''. The quantum gun consists of a gun coupled to a quantum system of a particle in a superposition of two states. At regular time intervals, a measurement of this qubit is made automatically, and if it is found to be in one state the gun fires a bullet, while if it is in the other it does not fire. After either firing or not firing, the quantum gun resets: a new superposition is set up and the memoryless process repeats\footnote{This slight variant of Tegmark's quantum gun of \cite{tegmark97} was given in \cite{tegbook}. Note that both of Tegmark's versions include an assistant in the box with the observer, but this is not a necessary feature of the setup and has been omitted through automation here.}. Each independent occurrence of this process we call a \emph{trial}.  We take this or a similar lethal setup to be indefinitely repeatable and to occur every second\footnote{The time interval is not important to the subsequent argument, other than to allow for many repetitions within a human lifetime.}. The apparatus and the observer are isolated from the rest of the universe, and this setup constitutes the Everett Box.

What is the experience of the observer? It is rather starkly illustrated by Tegmark's gun if we contrast the Everett Box with a similar experiment in which instead of being aimed at the observer the gun merely fires or does not fire depending on the measurement, and is aimed at a target in the box while the observer observes the experiment from within the box. In this case, the observer can expect to hear a random string of bangs and clicks: the bangs correspond to the gun firing, the clicks to it not firing and the equipment resetting. Over time the relative proportion of bangs and clicks will tend towards the relative likelihoods of those two outcomes. In the standard formulation, both outcomes occur with equal probability and thus the observer expects over time that 50\% of the sounds will be bangs, and 50\% clicks. The QIA is actually independent of these likelihoods, which need not be either equal or constant \cite{ev,everett57,tegbook}. It is such a general case that we consider in this paper.

The immediately preceding description is not that of the Everett Box, because the life and hence consciousness\footnote{Consciousness surviving death is not a part of the thought experiment.} of the observer does not depend upon the outcome of the measurement of the qubit. In the Everett Box, the gun is aimed at the observer in such a way that should it fire then death is certain and swift\footnote{Both conditions are necessary as outlined in \cite{tegmark97,lewis00}.}. In this case, what should the observer expect\footnote{\red{Assuming that she has first verified that her appartus is working by observing its operation without being in the line of fire.}}?

The answer depends upon which interpretation of quantum mechanics holds in our universe. If there is only one world, then for the majority of interpretations each trial involves the collapse of the wavefunction and a single outcome occurs for the observer: either she hears a ``click'' or she is instantly killed (and so hears nothing)\footnote{Note that even non-collapse interpretations such as Bohmian mechanics predict a single outcome for the observer.}. She might get lucky once, she might get lucky twice, but as time goes on and the number of trials increases, the odds of her surviving decrease exponentially.

If, however, there are many worlds, the totality of which contain all possible outcomes and histories, then by necessity there is always an observer alive after any number of trials. For example, after one trial there are two versions of the observer, the universe\footnote{Or at least, the observer.} having branched into two at the moment of the quantum measurement. In one world the observer heard ``click'' while in the other she died. After two seconds there are four worlds. In one of them, the observer's history shows that she heard ``click-click''. In another world, she heard ``click'' and then died on trial 2. In the third and fourth she died on trial 1\footnote{Here we assume that a predetermined number of trials occur regardless of whether the observer is alive. This assumption is not actually necessary for the arguments which follow, and can easily (though with some loss of elegance) be removed.}. After three trials there is an observer whose history is ``click-click-click'', and after any number, $m$, of trials there will always be one world in which the observer has survived to hear $m$ clicks. After a large number of trials there are many worlds, in all but one of which the observer is dead, but crucially there remains one living observer. Thus the subjective probability of surviving $m$ trials is 1 for any $m$, because there is a world in which the observer is still alive after any number of trials \cite{lewis00}. Note that the $2^m$ worlds ``created'' in this branching process are not used for quantifying the \emph{subjective} probability of 1 that the the observer survives --- at no point do we calculate subjective probabilities over a set of universes in only some of which does a subject exist \cite{lewis00}.

Thus from the observer's perspective\footnote{And from hers alone; an external observer opening the box after a pre-ordained number of seconds (trials) will almost certainly find a corpse within the box. This footnote is the only point in the present paper at which we quantify over a multiverse of branched universes.} this experiment an establish T1 and T2, though the stakes are high. The argument runs as follows. The chances of remaining alive after a large number of trials in a non-MWI universe is monotonically and exponentially decreasing because each trial is independent of the preceding trials. Thus at some point the probability of being alive will be lower than some threshold at which the still-alive observer can reject any non-MWI interpretation purely on the grounds of the low probability of such a sequence of events occurring. This is a standard experimental approach, and as usual the threshold $\epsilon\ll 1$ could be set to the traditional $5\sigma$-level, or indeed to a level of any stringency based on any criterion\footnote{It is also possible to formulate a threshold based on a Bayesian analysis.} due to the monotonically decreasing dependence on $m$ of the probability of remaining alive. Furthermore, with each subsequent survived trial the confidence in rejecting any non-MWI interpretation increases. Of course, if we do not live in an MWI universe then the experiment simply kills the observer within a short time. She does not know that she does not live in an MWI universe, but neither does she know anything ever again.

In more rigorous terms, in non-MWI interpretations the probability of being dead after $m$ trials, $P(m)$, in the standard presentation in which the probability of death at each trial is $50\%$, is simply
\[P(m) = 1- \left(\frac{1}{2}\right)^m\]
which tends to unity as $m\to \infty$. The same conclusion holds regardless of the probability of staying alive on trial $k$, which we denote $p_k$. In this case,
because $p_k< 1$  for all $k$ we still have
\begin{equation}\label{eq:class}
P(m)=1-\prod_{k=1}^m p_k \to 1 \quad \text{as} \quad m\to\infty.
\end{equation}
While (\ref{eq:class}) will remain true throughout this paper, we will see that in the computable logic RUSS we can no longer use it to conclude that the observer necessarily must expect to be dead after any finite number of trials. First, we must review what it means to be computable.

\section{Computability and The Infinite Monkey Theorem}\label{sec:monk}

\subsection{Computability and Logic}\label{subsec:comp}

A problem is said to be \emph{computable} if it can be solved in an effective manner, which can be more formally defined in a number of models of computation \cite{cooper04,copelandbook,bridgescomp}. Loosely speaking, computable problems are those which can be solved algorithmically in finite time. The major milestone in computability theory is the Turing-Church thesis identifying computable functions on the natural numbers with functions computable on a Turing machine \cite{bridgescomp,sep-mathematics-constructive, sep-recursive-functions,copelandbook}.

It is not simply the rise in computer simulations, nor the ``shut-up-and-calculate'' instrumentalist approach to physics \cite{mermin04}, which have led some authors to suggest that computability should be a requirement for our theories of physics \cite{zuse69,loop1,schmidhuber97,thooft99,fredkin03,lloyd2005theory,nks}. It is instead the notion of the \emph{effective method} embedded in computability that is important. A method is called effective for a class of problems when it comprises a finite set of instructions which can be followed by a mechanical device\footnote{The idea here is not that they must be followed by such a device, but that even a human following them needs no \emph{ingenuity} in order to derive a correct answer.}, that these instructions produce a correct answer, and that they finish after a finite number of steps \cite{copelandbook}.

\red{From a philosophical perspective, t}he desirability of computability in physics is therefore a product of a desire to know, and a belief that the universe is ultimately comprehensible to us. The reasoning in the syllogism goes that if we accept the two premises that (1) the universe is entirely comprehensible to the human mind, and (2) there is nothing extra-computational happening in the human mind, then we must accept the conclusion that physics is necessarily computable.

\red{Moreover, from a practical point of view, the quantum calculations often used to verify observations are based on the classical solution to systems of partial differential equations. Both the systems themselves, with their corresponding boundary and initial conditions, and the numerical soltuions, are formulated and numerically approximated using methods relying ultimately on classical logic --- see for example the discussion of the differential operator in \cite{bridges97}, and other issues outlined in \cite{sep-mathematics-constructive}.}

However, our current theories of physics are not computable, built as they are on classical mathematical ideas which in turn rely on classical, non-computable, logic \cite{sep-mathematics-constructive}. There are two issues here. The first concerns the notion of infinity and the related notion of continuity. Infinities abound in our physical theories, whether they are in limiting behaviours (as examined in non-classical logics in the present paper) or in the related idea of continuous matter or continuous fields. In the latter case, even though we seem to know that neither matter nor fields are continuous in our universe, we treat the ``gap'' between our continuous theories and discrete nature as being essentially a rounding error: the high accuracy of predictions made with the (presumptively Platonic) continuous theories is because our universe is approximately continuous. It is, after all, perhaps only discrete below the Planck length, or on time scales shorter than the Planck time.

The second issue, and the one that concerns us in this paper, is the notion of the underlying logic of the universe. Classical logic is not computable, relying as it does on non-computable notions such as the Law of Excluded Middle (LEM) and omniscience principles \cite{varieties,sep-mathematics-constructive}. Why should we work with a logic that does not allow for computability if we wish our physics to be computable? One answer is similar to the response to continuity and infinity: because this logic works, to an astonishing degree \cite{wilson18}. A second response is simply to reject the second premise given above. Perhaps there is something extra-computational happening within the human mind\footnote{This perspective overlaps somewhat with the notion of \emph{hypercomputation} \cite{copeland02,copeland04}}. This is consistent with a robustly Platonic vision of the universe. If mathematical objects exist in a Platonic realm of forms to which our minds (somewhat mysteriously) have access, then the necessity of computability can be rejected. This is also consistent with the view above that our physical universe is only an (albeit excellent) approximation to one of Platonic forms.  

If however we insist with the authors above that our logic must be computable, then we necessarily have to work with non-classical logics which are computable. In particular, we should work within so-called \emph{constructive} interpretations of logic \cite{sep-mathematics-constructive}, in which the classical interpretations of disjunction and existence are rejected in favour of constructive ones. For example, the quantifier ``there exists'' becomes ``we can construct (that is, give an effective method for defining) an object for which the given statement is true''. There are several varieties of constructive mathematics \cite{varieties}. It should be noted that not all varieties reject notions of infinity. Bishop's Constructive Mathematics (referred to as BISH) \cite{bish}, for example, admits many classical mathematical objects which rely on infinities and continuity, but insists that proofs using these objects must proceed constructively (and are therefore computable). This illustrates the important distinction between the \emph{epistemological constructivism} of BISH which remains agnostic on the ontology of mathematical objects, and the \emph{ontological constructivism} of other varieties of constructive mathematics which insist that both objects and proofs (procedures) must be computable \cite{sep-mathematics-constructive,varieties}. It has been said that computable mathematics is simply mathematics done with intuitionistic logic \cite{sep-mathematics-constructive}.

In order to subject the QIA to a strong scrutiny in a non-classical logic, we here choose an ontologically constructive variety of constructive logic, Recursive Constructive Mathematics, RUSS \cite{sep-mathematics-constructive}.  RUSS is a constructive version of recursive function theory, in which functions on the natural numbers are defined recursively. Essentially, RUSS takes the classical recursive analysis in the tradition of Turing and Church but uses only intuitionistic logic.

We remind the reader that this paper is not an argument in support of constructive methods in physics in general, neither of RUSS in particular. (The interested reader is referred to \cite{hellman93,nielsen97,bridges97,bridges99,bridgessvozil00} amongst others.) Rather, it seeks to examine the QIA in a computable context, in anticipation either of reaching a testable implication that the universe is computable, or of recasting the QIA in a computable form. In the following subsection, we briefly outline the theorems of a recent work in computable probability based on RUSS which will be central to the argument of this paper.

\subsection{The Infinite Monkey Theorem}\label{subsec:imt}

Working in RUSS, \cite{mjw} proved a seemingly counter-intuitive theorem, which we call here the \emph{Infinite Monkey Theorem (IMT)}. To state the IMT we first need some notation. The IMT was written in the playful language of the famous aphorism that a large enough group of monkeys with typewriters will reproduce the complete works of Shakespeare, but as is made clear in \cite{mjw}, the IMT is really about computable probability distributions, as indeed is our focus in the present paper.

Retaining the metaphor of \cite{mjw}, we work in an alphabet $A$ (of size $|A|$, including punctuation) and call a \emph{$w$-string} any string of characters of length $w \in \NN$. For example, ``\emph{Everett}'' is a 7-string over the alphabet $\{E,e,r,t,v\}$. Each monkey works on a computer keyboard with $|A|$ unique keys and each monkey types a $w$-string in finite time. We define $M$ to be an infinite, enumerable set of monkeys (the \emph{monkeyverse}), and for any $m \in \NN$ the \emph{$m$-troop} of monkeys to be the first $m$ monkeys in $M$.  We then have
\begin{Thm}[Infinite Monkey Theorem]\label{thm:imt}
	Given a finite target $w$-string $T_w$ and a positive real number $\epsilon$, there exists a computable probability distribution on $M$ of producing $w$-strings such that:
	\begin{enumerate}
		\item[\textrm{(i)}] the classical probability that no monkey in $M$ produces $T_w$ is $0$; and
		\item[\textrm{(ii)}] the probability of a monkey in \emph{any} $m$-troop producing $T_w$ is less than $\epsilon$.
	\end{enumerate}
\end{Thm}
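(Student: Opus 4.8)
The plan is to exhibit a single computable assignment of a $w$-string distribution to each monkey, and then to read off (i) and (ii) in two \emph{different} logics: (i) is a statement of classical measure theory, whereas (ii) is a theorem of RUSS, and the apparent tension between them is precisely the content. Because RUSS adjoins Church's Thesis, its quantifiers range only over computable objects and its limits require a computable modulus of convergence; consequently RUSS is not a subtheory of classical mathematics, and the classical identity linking $P(\text{some monkey in }M\text{ succeeds})$ to $\lim_m P(\text{some monkey in the }m\text{-troop succeeds})$ --- which rests on continuity of measure from above (countable additivity) --- need not survive constructively. The construction is therefore engineered so that the classical reading forces the cumulative chance of ever hitting $T_w$ up to $1$ (giving (i)), while the recursively accessible content of any finite troop keeps that chance below $\epsilon$ (giving (ii)).

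First I would fix the alphabet $A$, the target $T_w$, and write $t=|A|^{-w}$ for the baseline uniform single-string hit probability. Let $p_i$ denote the probability that monkey $i$ produces $T_w$ under the distribution to be built, and let $q_m=\prod_{i=1}^m (1-p_i)$ be the chance that no monkey in the $m$-troop hits the target, so the troop success probability in (ii) is $1-q_m$. The construction ties the $p_i$ to a recursive enumeration of a non-recursive c.e.\ set $K$ (for definiteness the halting set), enumerated as $K=\{k_1,k_2,\dots\}$ by a computable injection, placing the target mass at the monkeys indexed along this enumeration with carefully chosen sizes. The sizes are selected so that, read classically, $\sum_i p_i$ diverges (equivalently $q_m\to 0$), while the cumulative mass that RUSS can actually certify inside any finite troop stays below $-\log(1-\epsilon)$, because the divergence is carried entirely by the non-computable tail of the enumeration and so is never witnessed by any computable modulus.

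I would then verify (i) purely classically: with $\sum_i p_i=\infty$ the infinite product $\prod_i(1-p_i)$ vanishes, so by the classical reading the probability that no monkey in $M$ produces $T_w$ is $0$; this uses only LEM-flavoured analysis and the exact form of $P(m)$ in (\ref{eq:class}). Proving (ii) inside RUSS is the complementary task: for each given $m$ I must produce an \emph{effective} certificate that $1-q_m<\epsilon$. Here the recursive structure does the work, since only computably boundedly many high-mass monkeys can be certified to lie in the $m$-troop, so the certifiable cumulative hit probability is provably under $\epsilon$, and --- crucially --- this argument never requires deciding membership in $K$.

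The hard part, and the genuine heart of the theorem, is to realise (i) and (ii) in \emph{one and the same} computable distribution, i.e.\ to arrange a single object whose classical interpretation drives the infinite product to $0$ while its RUSS interpretation leaves every finite troop below $\epsilon$. This is possible only by exploiting, rather than contradicting, the constructive failure of continuity of measure from above: the nested events ``no monkey in the $m$-troop succeeds'' have measures that classically converge to the measure of ``no monkey in $M$ succeeds'', but in RUSS that limit exchange is exactly the omniscient step we are forbidden to take. I expect the delicate bookkeeping --- keeping the distribution computable, keeping each troop bound effective and independent of the halting problem, yet letting the classically computed total mass diverge --- to hinge on a Specker-type enumeration in which the modulus of convergence of $q_m$ is provably non-computable, and I would budget most of the proof's effort there.
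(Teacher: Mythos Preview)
The paper does not contain its own proof of Theorem~\ref{thm:imt}. The Infinite Monkey Theorem is quoted from \cite{mjw} as background for the later results; the surrounding text in \S\ref{subsec:imt} merely paraphrases its content and points the reader to \cite{mjw} for the actual construction and argument. There is therefore nothing in this paper to compare your attempt against.

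On your sketch itself: the overall framing --- build a single computable distribution, read part~(i) with classical measure theory, and read part~(ii) inside RUSS using a Specker-flavoured construction tied to an enumeration of a non-recursive c.e.\ set --- is in the right spirit, and your remark that the tension lives precisely in the failure of the classical limit exchange (continuity of measure from above) matches the paper's informal gloss. However, as written the core step does not close. You propose that classically $\sum_i p_i=\infty$, so $q_m=\prod_{i\le m}(1-p_i)\to 0$; but if the distribution is computable in the sense that each $p_i$, and hence each finite product $q_m$, is a computable real, then the classical fact that some particular $q_{m_0}<1-\epsilon$ is witnessed by a finite computation that RUSS can carry out too, which would directly refute your claim that $1-q_m<\epsilon$ for \emph{every} $m$ in RUSS. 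Your appeal to ``only computably boundedly many high-mass monkeys can be \emph{certified} to lie in the $m$-troop'' changes the question from the value of $q_m$ to what can be proved about it, which is not the same thing. So either the notion of ``computable probability distribution'' used in \cite{mjw} does not force each $q_m$ to be a definite computable real with the obvious classical value, or the ``classical probability'' in (i) is not the infinite product $\lim_m q_m$ in the way you assume; in either case the reconciliation you flag as ``the hard part'' is not achieved by the mechanism you describe, and you would need to consult \cite{mjw} to see how the construction actually threads this needle. The paper's only concrete hint is the remark in \S\ref{sec:conc} that the pathological condition amounts to $p_k>1-\epsilon$ for all $k$, which already sits uneasily with your plan to make $\sum_i p_i$ diverge by placing large masses along the enumeration of $K$.
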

\cite{mjw} established an even stronger, target-free version of this theorem, which requires only a knowledge of $w$, not of $T_w$.

The theorem and its proof are computable. The theorem shows that while it is classically true that it is impossible that no monkey reproduces the works of Shakespeare (part (i)), it is possible to construct a so-called \emph{pathological} probability distribution on the monkeyverse such that the chances of actually finding the monkey that does so can be made arbitrarily small (part (ii)). The key point in part (ii) is that this is true for \emph{any} finite $m$-troop of monkeys; the pathological distribution does not require knowledge of the size of the $m$-troop, it is simply pathological for all finite sets.

The monkeys correspond to any finite black-box process occurring in finite time. The general conclusion drawn in \cite{mjw} is that in a computable universe the space of all possible probability distributions on enumerable sets contains a non-empty set of pathological distributions for which the IMT holds. This is in contradistinction to a universe governed by classical logic in which the IMT does not hold. It is this distinction that we exploit in the remainder of the paper, by examining the impact of the existence of pathological distributions on the enumerable set of trials in the Everett Box.

\section{Pathological Distributions Imply the Rejection of the Quantum Immortality Argument}\label{sec:main}

With the notation from \S\ref{sec:qi} we can state that the probability $P(m)$ of dying within $m$ trials is given by
\begin{equation}
P(m) = 1-\prod_{k=1}^m p_k
\end{equation}
for any $m\in\NN$, where $p_k$ is the probability of not dying on trial $k$. Note that here, in contrast to the manner in which the Everett Box is normally described, but in keeping with the more general case which Everett himself allowed for in \cite{ev}, we consider a quantum apparatus with variable probabilities at each trial. The subsequent argument does not depend upon the unknowability in advance of the probability of death at each trial, $1-p_k$; after all, in the standard formulation, $p_k=0.5$ for all $k$. Whatever the distribution of values of $p_k$, the observer cannot predict in advance whether she lives or dies on trial $k$, and her fate is determined purely by the unknowable quantum state of the apparatus. We can now state the following theorem.
\begin{Thm}[Pathological Mortality Theorem]\label{thm:pqi}
	The QIA fails in RUSS.
\end{Thm}

\begin{proof}
	The QIA relies upon the observer rejecting non-MWI interpretations once  $1-P(m)$, the probability of her surviving $m$ trials, drops below some threshold. She can always bound the number of trials required to drop below this threshold even if she is ignorant of the distribution of $p_k$.
	
	However, while classically $1-P(m) \to 0$ as the number of trials tends to infinity, there is a computable probability distribution on the trials such that the probability that the observer is alive after \emph{any} finite number of trials is arbitrarily close to 1. This follows directly from the proof of the IMT in \cite{mjw}. In particular, we place the objects of the IMT and the objects of the Pathological Mortality Theorem (PMT)  in one-to-one correspondence as outlined in the following table.
\begin{center}
	\begin{tabular}{ | c || p{5cm} | p{5cm} | }
		\hline
		 & IMT & PMT \\ \hline \hline
		$p_k$ & probability that $k^\text{th}$ monkey fails to reproduce Shakespeare & probability of not dying on $k^\text{th}$ trial \\ \hline
		$P(m)$ & probability that $m$-troop does reproduce Shakespeare & probability of dying within $m$ trials  \\
		\hline
	\end{tabular}
\end{center}

Thus while classically it remains true that the observer's probability of being alive after $m$ trials tends to 0 as $m$ tends to infinity, the classical interpretation of that result as being that after a certain \emph{finite} number of trials the probability of the observer being alive should be so small that she should be surprised at remaining alive and reject any non-MWI interpretation is not true in a computational sense, in which that probability can remain arbitrarily close to 1 for \emph{any} finite number of trials.
\end{proof}

There is an apparent contradiction between the classical probability of remaining alive tending to zero while it remains arbitrarily close to unity for \emph{any} finite number of rials. However, as outlined in \cite{mjw}, it is important to note that the apparent contradiction here is only between the classical notion of the limit and the existence of computable pathological distributions within RUSS; we are deliberately comparing results from non-commensurate logical systems in order to show that classical logic may lead us astray in a computable universe.

The PMT says that in a universe run on computable logic\footnote{We take this to be equivalent to the requirement for computability in the logic we use in our theories of physics.} the QIA is no longer valid, and therefore if our universe is one governed by computable logic the sole argument which claims to show that MWI is testable is no longer valid. If the quantum apparatus happens to be governed by a pathological distribution then it is no longer unlikely that the observer remains alive after any finite number of trials, since that likelihood can remain arbitrarily close to 1.  As a result, since the observer can never know for sure that she is not in a pathological distribution, she cannot surely state that remaining alive after any finite number of trials is unlikely and so she can never reject non-MWI interpretations of quantum mechanics. However, in the next section we formulate a new version of the QIA which hold even in a computable universe and even with the existence of the PMT.

\section{Quantum Immortality Restored}\label{sec:restore}

We state our main result as a theorem.

\begin{Thm}[computable Quanum Immortality Argument, cQIA]\label{thm:restored}
	The Everett Box implies that the MWI is testable and provides private evidence in support of the MWI even in a computable universe modelled by RUSS.
\end{Thm}

\begin{proof}
 Suppose an observer in an Everett Box in a RUSS-universe has survived many trials. There are two situations to consider depending on whether the probability distribution on the trials is pathological or not. To reiterate, the observer does not know and has no way of knowing which situation holds.
 
 First, if the probability distribution is not pathological then the standard, classical-logic QIA holds, and the observer concludes that she must reject all non-MWI interpretations of quantum mechanics.
 
 On the other hand, suppose that the distribution is pathological. Since Theorem 5 of \cite{mjw} showed that such distributions are vanishingly rare then the observer must reject all non-MWI interpretations since in a single world the odds of being in such a distribution are vanishingly small, whereas in the MWI there will always be a branch of the observer alive in a pathological distribution.
\end{proof}

To state the proof in other terms, we note that in a classical universe, so the original QIA goes, the observer rejects non-MWI interpretations because the odds of surviving repeated trials are so low, whereas in a RUSS computable universe she rejects non-MWI interpretations for the same reason if she happens to be in a non-pathological distribution, or because the odds of being in a pathological situation where the PMT holds in a single world are also vanishingly low. The observer does not need, therefore, any knowledge of whether she is in such a pathological experiment since either way she must reject non-MWI interpretations on the same basis: namely, the unlikelihood of being in that situation if there is only one world.

\section{Discussion}\label{sec:conc}

Before a broader discussion, we once again reiterate that at no point does the observer need to quantify probabilities over other worlds. She merely compares her probability of surviving $m$ trials \emph{given there is one world} with a threshold set by some standard criterion such as the common $5\sigma$-significance. In the cQIA, she makes two comparisons. In the first, she compares to the threshold the likelihood of there being only one world and in that world her equipment is governed by a non-pathological distribution. In the second, she compares to the threshold the likelihood of there being only one world and that in her world her equipment is governed by a pathological distribution. In both cases, she can easily pass any threshold by running a sufficient number of trials, and so she rejects the null hypotheses, thereby rejecting the existence of a single world. This she takes as private evidence for the MWI, which therefore remains testable even in a computable universe.

Therefore, we have argued that the Everett box is testable and provides evidence in support of the MWI interpretation of quantum mechanics even when the computability requirement is added to physics through employing RUSS, a constructive, computable logic. Our main point is therefore that those whose rejection of the MWI depends on some future recasting of physics in a computable form do not have that option if RUSS is the correct logic on which to base physics. We have shown, in fact, that a computable version of the QIA which we call the cQIA still holds in at least one computable logic. The case to be made against the MWI therefore must be stronger than has previously been appreciated.

This naturally raises the question as to whether a similar argument holds in other computable logics\footnote{In which case, perhaps the ``c'' in cQIA would deserve capitalization.}. For example, can we reproduce the argument in BISH, a computable logic which preserves most classical mathematical objects, including some of those which involve either infinity or continuity? What about in other logics which do not allow for such objects? And of course, what happens in a completely finitist universe? 

We have two final points to make. The first is to point out that although the argument here is given in a quantum context, the argument behind the proof of the PMT works in any situation, quantum or otherwise, in which the probability of an event occurring tending to 1 in the limit of an infinite sequence of trials is taken to mean that the probability of that event not having happened in any finite sequence of trials necessarily tends to 0. In RUSS, this is not true.

Finally, we remind the reader that the MWI does not assume that new universes are ``created'' at each quantum decision --- it takes the language of ``branching'' as a useful metaphorical tool rather than a literal description of reality, a reality simply described by a wavefunction which never collapses (see \cite{tegmark97} and references therein). In fact, Tegmark has argued (\cite{tegmark97,tegbook} and elsewhere) that a non-collapsing wavefunction plus decoherence strongly implies a platonic ontology in which the wavefunction and other mathematical objects are truly real, in fact constitute the only real things, and that the categories of human perceptions of the world are not to be taken literally, since they are prejudiced by our evolutionary history (see also \cite{hoffman}). This ``Mathematical Universe Hypothesis'' is an extreme form of platonism. We note in closing that despite popular arguments to the contrary, a platonic ontology is entirely compatible with a constructive epistemology --- see \cite{sep-mathematics-constructive}.

% \subsection*{Acknowledgements}

\red{The author wishes to thank the referees of an earlier version of this paper for their insightful comments, the responses to which have certainly improved the paper.}

\bibliographystyle{alpha}
\bibliography{qi}

\end{document}